\theoremstyle{plain}
\newtheorem{thm}{Theorem}[section]
\newtheorem*{cor}{Corollary}
\theoremstyle{definition}
\newtheorem{defn}{Definition}[section]
\newtheorem{exmp}{Example}[section]
\theoremstyle{remark}
\newtheorem*{rem}{Remark}
\title{A Cram\'{e}r-Rao Bound for Semi-Blind \\Channel Estimation in \\Redundant Block Transmission Systems}
\author{Yen-Huan Li, \IEEEmembership{Member, IEEE, }Borching Su, \IEEEmembership{Member, IEEE, }and Ping-Cheng Yeh, \IEEEmembership{Member, IEEE}
}
\begin{document}
\maketitle
\begin{abstract}
A Cram\'{e}r-Rao bound (CRB) for semi-blind channel estimators in redundant block transmission systems is derived. The derived CRB is valid for any system adopting a full-rank linear redundant precoder, including the popular cyclic-prefixed orthogonal frequency-division multiplexing system. Simple forms of CRBs for multiple complex parameters, either unconstrained or constrained by a holomorphic function, are also derived, which facilitate the CRB derivation of the problem of interest. The derived CRB is a lower bound on the variance of any unbiased semi-blind channel estimator, and can serve as a tractable performance metric for system design.
\end{abstract}
\begin{IEEEkeywords}
Cram\'{e}r-Rao bound (CRB), semi-blind channel estimation, block transmission systems, complex parameters, constrained parameters.
\end{IEEEkeywords}

\section{Introduction}

We derive a variance lower bound for unbiased semi-blind channel estimators in redundant block transmission systems, practical designs of which can be found in \cite{Zhou2001,Muquet2002,Chang2004,Zeng2004,Su2007b,Shin2008,Wan2008,Sarmadi2009,Yu2009,Kim2009,Wan2011}. We formulate the semi-blind channel estimation problem as a non-Bayesian parametric estimation problem with constrained parameter space, and derive the constrained Cram\'{e}r-Rao bound (CRB) for unbiased channel estimators explicitly. The derived bound is valid for any full-rank linear redundant precoder, including the popular cyclic-prefixed orthogonal frequency-division multiplexing (CP-OFDM) system adopted in IEEE 802.11n \cite{802.11n}, IEEE 802.16e \cite{IEEE802.16e} and 3GPP-LTE \cite{3GPPLTE}. 

The derived CRB can be used as a tractable performance metric in communication system design as in \cite{Dong2002,Stoica2003}, where pilot symbols are designed with respect to the resulted CRBs. Note that the problem settings in \cite{Dong2002,Stoica2003} are different from ours. In \cite{Stoica2003} the authors consider the non-blind approach instead of the semi-blind one. The authors of \cite{Dong2002} also focus on the semi-blind approach, but their definition of pilot symbols is not as general as ours (cf. (\ref{eq_def_pilot_symbol})), so they do not need to derive a constrained CRB for complex parameters (see Theorem \ref{thm_constrained_crb} in this paper). Furthermore, they adopt the Bayesian approach and model the channel as a random vector of certain probability distribution, while in this paper we adopt a non-Bayesian approach, and the information of channel distribution is not required.

A CRB is usually viewed as a lower bound of the optimal mean square error (MSE) performance, but this is not the case if we do not specify the bias function of estimators. Most of the existing channel estimators, such as those proposed in \cite{Zhou2001,Muquet2002,Chang2004,Zeng2004,Su2007b,Shin2008,Wan2008,Sarmadi2009,Yu2009,Kim2009,Wan2011}, are \emph{possibly biased with unknown bias functions}. Since MSE is the sum of the variance and the square of the bias of an estimator, an appropriately designed biased estimator may outperform all unbiased estimators, and any CRB for unbiased estimators, in terms of MSE \cite{Stoica1990,Eldar2008,Todros2011}. This is why we focus more on the role of CRB as a tractable performance metric rather than a performance lower bound in terms of MSE.


To calculate the CRB for unbiased semi-blind channel estimators in a wireless communication system, which is usually modeled by complex numbers in a discrete-time baseband representation, we need to extend the existing results of CRBs for unconstrained and constrained parameters, originally for real parameters \cite{Kay1993,VanTrees2001,Gorman1990,Marzetta1993,Stoica1998}, to the case of multiple complex parameters. There has been some literature focusing on this topic \cite{Bos1994,Jagannatham2004,Ollila2008,Schreier2010,Smith2005}. In \cite{Bos1994,Jagannatham2004,Ollila2008,Schreier2010}, the derived CRBs for unconstrained and constrained complex parameters, unlike the corresponding CRBs for real parameters, are variance lower bounds for any unbiased estimator for a $(2n)$-dimensional complex parameter $[\bm{\theta}^T \, \bm{\theta}^H]^T$, instead of the original $n$-dimensional complex parameter $\bm{\theta}$ of interest. Computational complexity of calculating the CRB is increased, and additional calculation is required to extract the CRB of any unbiased estimator of $\bm{\theta}$, which hinders us from obtaining a closed-form or insights when deriving the bounds. In \cite{Smith2005}, a variance lower bound is derived directly for unbiased estimators of $\bm{\theta}$ for the first time, but the result is in a complicated form compared with the well-known CRB for real parameters \cite{Kay1993,VanTrees2001}.

An additional contribution of this paper is a simplification of CRB formulae for multiple complex parameters. Two cases are considered here. In the first case the parameter space is unconstrained; that is, the parameter space is the set of all $n$-dimensional complex vectors. In the second case the parameter space is constrained by some \emph{holomorphic} function. The derived results are \emph{simple} for two reasons. First, they share exactly the same forms with their real counterparts, and thus avoid the use of a double-sized parameter. Second, since the derivation procedure is valid for both real and complex parameters, the simple forms enables us to extend existing CRBs in various situations, originally for real parameters, to the case of complex parameters without modifying the forms.

The rest of the paper is organized as follows. In Section \ref{sec_complex_crb}, we derive simple CRBs for unconstrained and constrained complex parameters. In Section \ref{sec_syst_model_problem_formulation}, we describe the discrete-time baseband system model of a redundant block transmission system, and formulate the semi-blind channel estimation problem as a non-Bayesian parametric estimation problem with a constrained parameter space. We then derive the CRB for semi-blind channel estimators in Section \ref{sec_crb_for_chEst}, applying results obtained in Section \ref{sec_complex_crb}. Conclusions are presented in Section \ref{sec_conclusion}.

\subsection*{Notation} \label{sec_notation}
Bold-faced lower case letters represent column vectors, and bold-faced upper case letters are matrices. Superscripts such as in $\bm{v}^*$, $\bm{v}^T$, $\bm{v}^H$, $\bm{M}^{-1}$, and $\bm{M}^{\dagger}$ denote the conjugate, transpose, conjugate transpose (Hermitian), inverse, and Moore-Penrose generalized inverse of the corresponding vector or matrix. The Kronecker product of $\bm{A}$ and $\bm{B}$ is denoted by $\bm{A} \otimes \bm{B}$. Matrices $\bm{I}_n$ and $\bm{0}_{m\times n}$ denote the $n\times n$ identity matrix and the $m\times n$ zero matrix, respectively. Notation $\mathcal{T}_{\bm{v}}$ denotes a Toeplitz matrix 
\begin{equation}
\left[
\begin{array}{cccc}
v_0 & 0 & \cdots & 0 \\
v_1 & \ddots & \ddots & \vdots \\
\vdots & \ddots & \ddots & 0 \\
v_L & \ddots & \ddots & v_0 \\
0 & \ddots & \ddots & v_1 \\
\vdots & \ddots & \ddots & \vdots \\
0 & \ldots & 0 & v_L
\end{array}
\right], \notag
\end{equation}
where $v_i$ is the $i$th element of the $L$-dimensional vector $\bm{v}$. The notation $\bm{A} \geq \bm{B}$ means that $\bm{A} - \bm{B}$ is a nonnegative-definite matrix. The vector $\mathsf{E}\left[ \bm{v} \right]$ denotes the expectation of $\bm{v}$, and the matrix $\mathsf{E}\left[ \bm{M} \right]$ denotes the expectation of $\bm{M}$. The matrix $\mathsf{cov}(\bm{u}, \bm{v})$ is defined as $\mathsf{E}[ ( \bm{u} - \mathsf{E}[\bm{u}] ) ( \bm{v} - \mathsf{E}[\bm{v}] )^H ]$, and $\mathsf{cov}( \bm{v}, \bm{v} )$ is denoted by $\mathsf{cov}( \bm{v} )$ for brevity.

\section{Simple Forms of CRBs for Complex Parameters} \label{sec_complex_crb}
\subsection{Introduction}
Let $\bm{y}$ be a sample from some probability density function (pdf) $p( \bm{y}; \bm{\theta} )$, which belongs to a family of pdf's $\left\{ p( \bm{y}; \bm{\theta} ), \bm{\theta} \in \Theta \right\}$, parametrized by a parameter space $\Theta$. The CRB shows that if both $\bm{y}$ and $\bm{\theta}$ are \emph{real} vectors, the variance of any unbiased estimator $\hat{\bm{\theta}}( \bm{y} )$ of $\bm{\theta}$ must follow the following inequality.
\begin{equation}
\mathsf{cov}( \hat{\bm{\theta}} ) \geq \bm{J}^{-1} := \left( \mathsf{E} \left[ \bm{v} \bm{v}^T \right] \right)^{-1}, \notag
\end{equation}
where the random vector $\bm{v}( \bm{y}; \bm{\theta} ) := {\partial \ln p(\bm{y}; \bm{\theta})} / {\partial \bm{\theta}}$
is called the \emph{score}, and the matrix $\bm{J} := \mathsf{E} \left[ \bm{v} \bm{v}^T \right]$ is called the \emph{Fisher information matrix (FIM)} \cite{Kay1993,VanTrees2001}. Sometimes additional \emph{deterministic} \textit{a priori} information is available, which indicates that $\Theta$ is \emph{constrained} to some proper subset of the set of all $n$-dimensional complex vectors. The CRB for such cases are derived in \cite{Gorman1990,Stoica1998,Marzetta1993}. Note this \emph{deterministic} \textit{a priori} information does not result in a Bayesian setting.



Throughout the paper the complex derivative operation is defined as follows.
\begin{defn}[Complex derivative] \label{def_diff}
\begin{equation}
\frac{\partial f}{\partial z} := \frac{1}{2} \left( \frac{\partial f}{\partial \alpha} - j \frac{\partial f}{\partial \beta} \right), \quad \frac{\partial f}{\partial z^*} := \frac{1}{2} \left( \frac{\partial f}{\partial \alpha} + j \frac{\partial f}{\partial \beta} \right)\notag
\end{equation}
for any complex function $f$ and complex variable $z := \alpha + j\beta$, $\alpha, \beta \in \mathbb{R}$. 
\end{defn}
\begin{rem}
The definition is sometimes referred to as Wirtinger's calculus in literature \cite{Fritzsche2002}, and is widely adapted in the fields of complex analysis \cite{Fritzsche2002,Hormander1990} and matrix analysis \cite{Horn1994}.
\end{rem}

We define the complex score and the complex FIM based on Definition \ref{def_diff}.

\begin{defn}[Complex score and complex Fisher information matrix]
The \emph{complex score} is defined as $\bm{v}( \bm{y}; \bm{\theta} ) := {\partial \ln p( \bm{y}; \bm{\theta} )} / {\partial \bm{\theta}^*}$. The \emph{complex Fisher information matrix} is defined as $\bm{J} := \mathsf{E}[ \bm{v} \bm{v}^H ]$.
\end{defn}

We assume the regularity condition holds:
\begin{equation}
\mathsf{E}[ \bm{v} ] := \mathsf{E}\left[ \partial \ln p( \bm{y}; \bm{\theta} ) / \partial \bm{\theta}^* \right] = \bm{0}. \notag
\end{equation}
The condition is obtained by differentiating both sides of the equation $\int p( \bm{y}; \bm{\theta} ) \/ d\bm{y} = 1$ by $\bm{\theta}^*$.

\subsection{CRB for Unconstrained Complex Parameters}
The CRB for unconstrained complex parameters is a natural consequence of the above definitions.
\begin{thm}[CRB for unconstrained complex parameters]
For any unbiased estimator $\hat{\bm{\theta}}( \bm{y} )$ for the parameter $\bm{\theta}$, 
\begin{equation}
\mathsf{cov}( \hat{\bm{\theta}} ) \geq \bm{J}^{-1}. \notag
\end{equation}
The equality holds if and only if $\hat{\bm{\theta}} - \bm{\theta} = \bm{J}^{-1} \bm{v}$ in the mean square sense.
\end{thm}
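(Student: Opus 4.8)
The plan is to mimic the classical real-parameter CRB derivation, carrying out all the bookkeeping with Wirtinger derivatives (Definition \ref{def_diff}) and Hermitian rather than ordinary transposes. The whole argument rests on a single cross-correlation identity,
\begin{equation}
\mathsf{cov}( \hat{\bm{\theta}}, \bm{v} ) = \bm{I}, \notag
\end{equation}
after which the bound and its equality condition both drop out of one ``completing-the-square'' step.

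First I would establish this identity. Start from unbiasedness, $\int \hat{\bm{\theta}}( \bm{y} ) p( \bm{y}; \bm{\theta} ) \, d\bm{y} = \bm{\theta}$, written entrywise as $\mathsf{E}[ \hat{\theta}_i ] = \theta_i$. The key subtlety is choosing the \emph{correct} differentiation variable: since $\bm{v}^H$ has entries $v_j^* = ( \partial \ln p / \partial \theta_j^* )^* = \partial \ln p / \partial \theta_j$, where the second equality uses that $\ln p$ is real-valued (so its two Wirtinger derivatives are complex conjugates of one another), I would differentiate the $i$th unbiasedness identity with respect to $\theta_j$, not $\theta_j^*$. Interchanging differentiation and integration—exactly the regularity assumption already invoked to get $\mathsf{E}[ \bm{v} ] = \bm{0}$—and using $\partial p / \partial \theta_j = ( \partial \ln p / \partial \theta_j ) p = v_j^* p$, gives $\mathsf{E}[ \hat{\theta}_i v_j^* ] = \partial \theta_i / \partial \theta_j = \delta_{ij}$. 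Subtracting $\mathsf{E}[ \theta_i v_j^* ] = \theta_i \mathsf{E}[ v_j^* ] = 0$ then yields $\mathsf{E}[ ( \hat{\theta}_i - \theta_i ) v_j^* ] = \delta_{ij}$, i.e. $\mathsf{E}[ ( \hat{\bm{\theta}} - \bm{\theta} ) \bm{v}^H ] = \bm{I}$; combined with $\mathsf{E}[ \bm{v} ] = \bm{0}$ and $\mathsf{E}[ \hat{\bm{\theta}} ] = \bm{\theta}$, this is precisely $\mathsf{cov}( \hat{\bm{\theta}}, \bm{v} ) = \bm{I}$.

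With the identity in hand I would complete the square. Define the zero-mean error vector $\bm{e} := ( \hat{\bm{\theta}} - \bm{\theta} ) - \bm{J}^{-1} \bm{v}$ and expand $\mathsf{E}[ \bm{e} \bm{e}^H ]$. Using $\bm{J} = \bm{J}^H$ (hence $\bm{J}^{-H} = \bm{J}^{-1}$), the identity $\mathsf{cov}( \hat{\bm{\theta}}, \bm{v} ) = \bm{I}$, and $\mathsf{E}[ \bm{v} \bm{v}^H ] = \bm{J}$, the four terms collapse to
\begin{equation}
\mathsf{E}[ \bm{e} \bm{e}^H ] = \mathsf{cov}( \hat{\bm{\theta}} ) - \bm{J}^{-1}. \notag
\end{equation}
Since the left-hand side is a covariance matrix it is nonnegative-definite, which is exactly $\mathsf{cov}( \hat{\bm{\theta}} ) \geq \bm{J}^{-1}$. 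Moreover equality holds if and only if $\mathsf{E}[ \bm{e} \bm{e}^H ] = \bm{0}$, i.e. $\bm{e} = \bm{0}$ in the mean square sense, which is the stated condition $\hat{\bm{\theta}} - \bm{\theta} = \bm{J}^{-1} \bm{v}$.

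The algebra of the completing-the-square step is immediate, so the only genuine obstacle is the first identity: one must differentiate with respect to $\bm{\theta}$ rather than $\bm{\theta}^*$, justify the conjugation $v_j^* = \partial \ln p / \partial \theta_j$ through the reality of the log-likelihood, and legitimize the interchange of differentiation and integration. Once $\mathsf{cov}( \hat{\bm{\theta}}, \bm{v} ) = \bm{I}$ is secured, the remainder is formally identical to the real-parameter case—precisely the ``simple form'' the paper advertises.
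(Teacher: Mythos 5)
Your proof is correct, and it reaches the same bound by what is mathematically the same mechanism as the paper, but packaged differently. The paper's proof is a two-line appeal to the generalized Cauchy--Schwartz inequality of \cite{Rao2000}, $\mathsf{cov}(\bm{y},\bm{y}) \geq \mathsf{cov}(\bm{y},\bm{x})\,\mathsf{cov}(\bm{x},\bm{x})^{-1}\mathsf{cov}(\bm{x},\bm{y})$, with $\bm{x}=\bm{v}$ and $\bm{y}=\hat{\bm{\theta}}$; it leaves entirely implicit the identity $\mathsf{cov}(\hat{\bm{\theta}},\bm{v})=\bm{I}$ that is needed for the right-hand side to collapse to $\bm{J}^{-1}$, and that identity is precisely where the complex case requires care. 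You supply exactly that missing step, correctly: differentiating the unbiasedness condition with respect to $\bm{\theta}$ rather than $\bm{\theta}^*$, and justifying $v_j^* = \partial \ln p/\partial\theta_j$ via the realness of the log-likelihood, is the right way to land on $\mathsf{E}[(\hat{\bm{\theta}}-\bm{\theta})\bm{v}^H]=\bm{I}$ under the paper's definition of the score as $\partial\ln p/\partial\bm{\theta}^*$. Your completing-the-square step with $\bm{e}=(\hat{\bm{\theta}}-\bm{\theta})-\bm{J}^{-1}\bm{v}$ is the standard proof of the cited covariance inequality specialized to this choice of $\bm{x}$ and $\bm{y}$, so it also delivers the equality condition in the same form. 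What your version buys is self-containedness and an explicit record of the one place the complex calculation genuinely differs from the real one; what the paper's version buys is brevity and a template that it reuses verbatim in the constrained case (Theorem \ref{thm_constrained_crb}), where only the cross-covariance identity changes.
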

\begin{proof}
We make use of the generalized Cauchy-Schwartz inequality in \cite{Rao2000}, which says
\begin{equation}
\mathsf{cov}( \bm{y}, \bm{y} ) \geq \mathsf{cov}( \bm{y}, \bm{x} ) \mathsf{cov}( \bm{x}, \bm{x} )^{-1} \mathsf{cov}( \bm{x}, \bm{y} ), \label{eq_general_CSI}
\end{equation}
and the equality holds if and only if 
\begin{equation}
\bm{y} = \mathsf{cov}(\bm{y}, \bm{x}) \mathsf{cov}(\bm{x}, \bm{x})^{-1} \bm{x} \notag
\end{equation} 
in the mean square sense. The theorem follows by substituting $\bm{x}$ and $\bm{y}$ by the score $\bm{v}$ and the estimator $\hat{\bm{\theta}}$, respectively. 
\end{proof}

\subsection{CRB for Complex Parameters Constrained by a Holomorphic Function}
First we review the definition of holomorphic functions.
\begin{defn}[Holomorphic function]
If a complex function $\bm{f}: \mathbb{C}^n \mapsto \mathbb{C}^m$ satisfies the condition
\begin{equation}
\frac{\partial \bm{f}}{\partial \bm{z}^H} = \bm{0}, \notag
\end{equation}
we call the function a \emph{holomorphic} function.
\end{defn}
\begin{rem}
A holomorphic function is also called an \emph{analytic} function in complex analysis, because a complex function is holomorphic if and only if it is analytic, although the two terms have different definitions \cite{Stein2003}.
\end{rem}

In this subsection we consider the case where the parameter space $\Theta$ is already known to be constrained by a holomorphic function $\bm{f}$. That is, 
\begin{equation}
\Theta := \{ \bm{\theta}: \bm{f}( \bm{\theta} ) = \bm{0} \}. \notag
\end{equation}
The estimator $\hat{\bm{\theta}}(\bm{y})$ is unbiased if $\mathsf{E}[ \hat{\bm{\theta}}(\bm{y}) ] = \bm{\theta}$ for all $\bm{\theta} \in \Theta$.

The CRB for complex parameters constrained by a holomorphic function is stated as follows.

\begin{thm}[CRB for complex parameters constrained by a holomorphic function] \label{thm_constrained_crb}
Let the constraint function $\bm{f}$ be a holomorphic function that maps from $\mathbb{C}^n$ to $\mathbb{C}^m$, $n \geq m$. Assume that $\partial \bm{f} / \partial \bm{\theta}^T$ has full rank. Choose $\bm{U}$ as a matrix with $(n-m)$ orthonormal columns that satisfies $\left( \partial \bm{f} / \partial \bm{\theta}^T \right) \bm{U} = \bm{0}$. Then for any unbiased estimator $\hat{\bm{\theta}}$ for the parameter $\bm{\theta}$,
\begin{equation}
\mathsf{cov}( \hat{\bm{\theta}} ) \geq \bm{U} \left( \bm{U}^H \bm{J} \bm{U} \right)^{-1} \bm{U}^H. \notag
\end{equation}
The equality holds if and only if $\hat{\bm{\theta}} - \bm{\theta} = \bm{U} \left( \bm{U}^H \bm{J} \bm{U} \right)^{-1} \bm{U}^H \bm{v}$ in the mean square sense.
\end{thm}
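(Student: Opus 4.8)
The plan is to follow the proof of the unconstrained theorem almost verbatim, replacing the score $\bm{v}$ by its restriction $\bm{U}^H \bm{v}$ to the tangent directions of $\Theta$ at $\bm{\theta}$, so that the \emph{a priori} knowledge $\bm{f}(\bm{\theta}) = \bm{0}$ enters only through the matrix $\bm{U}$. Concretely, I would again invoke the generalized Cauchy--Schwarz inequality (\ref{eq_general_CSI}), this time with $\bm{x} = \bm{U}^H \bm{v}$ and $\bm{y} = \hat{\bm{\theta}}$. The only nontrivial ingredient needed is the cross-covariance identity
\begin{equation}
\mathsf{cov}( \hat{\bm{\theta}}, \bm{U}^H \bm{v} ) = \bm{U}, \notag
\end{equation}
after which the three covariance blocks become $\mathsf{cov}(\hat{\bm{\theta}}, \bm{U}^H\bm{v}) = \bm{U}$, its Hermitian $\mathsf{cov}(\bm{U}^H\bm{v}, \hat{\bm{\theta}}) = \bm{U}^H$, and $\mathsf{cov}(\bm{U}^H\bm{v}) = \bm{U}^H \mathsf{E}[\bm{v}\bm{v}^H] \bm{U} = \bm{U}^H \bm{J} \bm{U}$ (the last two using the regularity condition $\mathsf{E}[\bm{v}] = \bm{0}$). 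Substituting these into (\ref{eq_general_CSI}) delivers exactly $\mathsf{cov}(\hat{\bm{\theta}}) \geq \bm{U}(\bm{U}^H\bm{J}\bm{U})^{-1}\bm{U}^H$.

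To prove the cross-covariance identity I would first exploit holomorphy of $\bm{f}$ to obtain a convenient local description of $\Theta$ near the true $\bm{\theta}$. Because $\partial\bm{f}/\partial\bm{\theta}^T$ has full rank and $\bm{f}$ is holomorphic, the holomorphic implicit function theorem furnishes a holomorphic chart $\bm{\theta}(\bm{\alpha})$, $\bm{\alpha}\in\mathbb{C}^{n-m}$, onto a neighborhood of $\bm{\theta}$ in $\Theta$ whose Jacobian at $\bm{\alpha}=\bm{0}$ has columns spanning the null space of $\partial\bm{f}/\partial\bm{\theta}^T$; after a linear reparametrization of $\bm{\alpha}$ I may take $\partial\bm{\theta}/\partial\bm{\alpha}^T = \bm{U}$ and, by holomorphy, $\partial\bm{\theta}/\partial\bm{\alpha}^{*T} = \bm{0}$. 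I would then differentiate the unbiasedness constraint $\int (\hat{\bm{\theta}} - \bm{\theta}(\bm{\alpha}))\, p(\bm{y};\bm{\theta}(\bm{\alpha}))\,d\bm{y} = \bm{0}$, which now holds for all $\bm{\alpha}$ near $\bm{0}$, with respect to $\bm{\alpha}^T$. Applying the Wirtinger chain rule, the derivative of $p$ passes through the score while its conjugate branch vanishes precisely because the chart is holomorphic, and the derivative of $-\bm{\theta}(\bm{\alpha})$ contributes $-\bm{U}$; collecting terms and using $\mathsf{E}[\bm{v}]=\bm{0}$ yields $\mathsf{E}[(\hat{\bm{\theta}}-\bm{\theta})\bm{v}^H]\bm{U} = \bm{U}$, which is the stated identity.

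The equality clause comes for free from the equality condition of (\ref{eq_general_CSI}): equality holds if and only if $\hat{\bm{\theta}} - \mathsf{E}[\hat{\bm{\theta}}] = \mathsf{cov}(\hat{\bm{\theta}},\bm{U}^H\bm{v})\,\mathsf{cov}(\bm{U}^H\bm{v})^{-1}(\bm{U}^H\bm{v})$ in the mean square sense, and substituting the three blocks together with $\mathsf{E}[\hat{\bm{\theta}}] = \bm{\theta}$ gives $\hat{\bm{\theta}} - \bm{\theta} = \bm{U}(\bm{U}^H\bm{J}\bm{U})^{-1}\bm{U}^H\bm{v}$. I expect the main obstacle to be the cross-covariance step, specifically justifying the holomorphic local parametrization and carrying out the chain rule cleanly. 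This is exactly where holomorphy of $\bm{f}$ is indispensable: a non-holomorphic constraint would make $\bm{\theta}(\bm{\alpha})$ depend on $\bm{\alpha}^*$ as well, producing extra pseudo-covariance terms $\mathsf{E}[(\hat{\bm{\theta}}-\bm{\theta})\bm{v}^T]$ and forcing the doubled-parameter formulation that this paper is designed to avoid. I would also remark that orthonormality of the columns of $\bm{U}$ is inessential for the bound: replacing $\bm{U}$ by $\bm{U}\bm{T}$ for any invertible $\bm{T}$ leaves the right-hand side $\bm{U}(\bm{U}^H\bm{J}\bm{U})^{-1}\bm{U}^H$ unchanged, so it may be imposed purely as a normalization.
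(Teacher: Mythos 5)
Your proposal is correct and follows essentially the same route as the paper: the paper likewise substitutes $\bm{x} = \bm{U}^H\bm{v}$, $\bm{y} = \hat{\bm{\theta}}$ into the generalized Cauchy--Schwarz inequality (\ref{eq_general_CSI}) after establishing $\mathsf{E}[(\hat{\bm{\theta}}-\bm{\theta})\bm{v}^H]\bm{U} = \bm{U}$, which it obtains by invoking the implicit function theorem and chain rule for holomorphic functions along the lines of Marzetta's derivation. Your explicit construction of the holomorphic chart $\bm{\theta}(\bm{\alpha})$ and differentiation of the unbiasedness constraint simply fills in the details the paper delegates to that citation.
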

\begin{proof}
Since the implicit function theorem and the chain rule hold for holomorphic functions, following the derivation of \cite[Theorem I]{Marzetta1993}, we have $\mathsf{E}[ (\hat{\bm{\theta}} - \bm{\theta}) \bm{v}^* ] \bm{U} = \bm{U}$. The theorem follows by substitute $\bm{x}$ and $\bm{y}$ by $\bm{U}^H \bm{v}$ and $\hat{\bm{\theta}}$, respectively, into the generalized Cauchy-Schwartz inequality (\ref{eq_general_CSI}).
\end{proof}

\begin{rem}
This bound is valid only for parameter spaces constrained by \emph{holomorphic} functions. Interested reader shall refer to \cite{Jagannatham2004} for the general case.
\end{rem}

\begin{rem}
We call the matrix $\bm{U}$ the \emph{orthonormal complement matrix} for convenience in the following sections.
\end{rem}


\section{System Model and Problem Formulation} \label{sec_syst_model_problem_formulation}


\subsection{System Model} \label{sec_syst_model}
A discrete-time baseband model of block redundant communications through a wireless multipath channel is presented in this section \cite{Vaidyanathan2010}. 

We model the wireless multipath channel as a finite impulse response (FIR) filter with order $L$, and the noise process as complex additive white Gaussian noise (AWGN) with zero mean and unit variance. We define an $(L+1)$-dimensional complex \emph{channel vector} $\bm{h} := [ h_0 \, \ldots \, h_L ]$ to represent the channel impulse response.

At the transmitter side, the source modulation symbols are divided into blocks of $M$ symbols. Each of the block is precoded by a $P$-by-$M$ precoding matrix $\bm{F}$. In general $P$ is larger than $M$ in order to mitigate the inter-block interference; in this way redundancy is introduced. We set $P := M+L$ throughout this paper.

\begin{exmp} \label{exp_ofdm}
In a CP-OFDM system, the $P$-by-$M$ precoding matrix $\bm{F}$ is defined as
\begin{equation}
\bm{F} := \left[ \begin{array}{c} \begin{array}{cc} \bm{0}_{L\times (M-L)} & \bm{I}_{L} \end{array} \\ \bm{I}_{M} \end{array} \right] \bm{W}, \notag
\end{equation} 
where the $M$-by-$M$ matrix $\bm{W}$ is the normalized inverse discrete Fourier transform (IDFT) matrix.
\end{exmp}

We assume the receiver collects $N$ blocks for each time of channel estimation, so it is more convenient to discuss $N$ blocks of modulation symbols in a whole. Denote the source modulation symbols by an $(MN)$-dimensional complex vector $\sqrt{\gamma} \bm{s}$, the precoder output, or the \emph{channel input}, is a $(PN)$-dimensional complex vector $\bm{x} := \sqrt{\gamma} \left( \bm{I}_N \otimes \bm{F} \right) \bm{s}$. The positive constant $\gamma$ is to normalize the complex random vector $\bm{s}$ such that $\mathsf{E}[ \vert s_i \vert^2 ] = 1$ for all $i$. However, we do not assume any \emph{a priori} information about the probability distribution of source modulation symbols at the estimator.

We denote the signal that the receiver observes as a $(PN+L)$-dimensional complex vector $\bm{y}$, which, according to the channel model, is defined as
\begin{equation}
\bm{y} := \sqrt{\gamma} \, \mathcal{T}_{\bm{h}} \left( \bm{I}_N \otimes \bm{F} \right) \bm{s} + \bm{n}. \label{eq_system_model}
\end{equation}
The noise vector $\bm{n}$ is a circularly-symmetric complex Gaussian distributed random vector with mean $\bm{0}_{(PN+L)\times 1}$ and covariance matrix $\bm{I}_{PN+L}$.

\subsection{Problem Formulation} \label{sec_problem_formulation}
We show that the parameter to be estimated must satisfy a holomorphic constraint function, which is determined by
\begin{enumerate}
\item the choice of the redundant precoding matrix, and
\item the assignment of pilot symbols.
\end{enumerate}
We will derive the holomorphic constraint function explicitly.

The $(NP+L+1)$-dimensional complex parameter $\bm{\theta}$ is defined as
\begin{equation}
\bm{\theta} := \left[ \bm{h}^H \, \bm{x}^H \right]^H, \notag
\end{equation}
where the vectors $\bm{h}$ and $\bm{x}$ are the channel vector and the channel input, respectively. Since our focus is on the variance of channel estimators, the channel input $\bm{x}$ is treated as a nuisance parameter \cite{Kay1993}.

We first focus on the constraint imposed by the choice of redundant precoding matrix. We can see from the system model that the channel input $\bm{x}$ must lie in the column space of the matrix $( \bm{I} \otimes \bm{F} )$; equivalently, selecting an orthonormal matrix $\bm{U}_n$ spanning the null space of the matrix $( \bm{I} \otimes \bm{F} )$, the channel input $\bm{x}$ must satisfy
\begin{equation}
\bm{U}_n^H \bm{x} = \bm{0}, \notag
\end{equation}
which is the constraint function due to redundant precoding.

Then we turn to the constraint due to the assignment of pilot symbols. Here we adopt a general expression that by saying there are some pilot symbols in the message vector $\bm{s}$, we mean that 
\begin{equation}
\bm{A} \bm{s} = \bm{c} \label{eq_def_pilot_symbol}
\end{equation}
for some constant matrix $\bm{A}$ and constant vector $\bm{c}$.

\begin{exmp}
When the matrix $\bm{A}$ is obtained by eliminating some rows of an identity matrix, the constraint reduces to the ordinary case where some of the source modulation symbols $\bm{s}$ are assigned some pre-determined values.
\end{exmp}

By the definition of the channel input $\bm{x}$, we have
\begin{equation}
\bm{A} \left( \bm{I} \otimes \bm{F} \right)^{\dagger} \bm{x} = \bm{c}, \notag
\end{equation}
which is the constraint function due to the assignment of pilot symbols.

In conclusion, the task of a semi-blind channel estimator $\hat{\bm{h}}(\bm{y})$ is to estimate the channel coefficient vector $\bm{h}$ with deterministic \emph{a priori} information that the true parameter $\bm{\theta}$ is in a set $\Theta$ defined as
\begin{equation}
\Theta := \left\{ \bm{\theta}: \bm{\theta} \in \mathbb{C}^{NP+L+1}, \bm{f}( \bm{\theta} ) = \bm{0} \right\}, \notag
\end{equation}
where the constraint function $\bm{f}$ is
\begin{equation}
\bm{f} ( \bm{\theta} ) :=
\left[ \begin{array}{cc}
\bm{0} & \bm{0} \\
\bm{0} & \left[ \begin{array}{c} \bm{U}_n^H \\ \bm{A}\left( \bm{I} \otimes \bm{F} \right)^\dagger \end{array} \right]
\end{array} \right]
\bm{\theta} - 
\left[ \begin{array}{c} \bm{0} \\ \bm{c} \end{array} \right]. \label{eq_overall_constraint}
\end{equation}
Note that we do not put any constraint on the channel vector $\bm{h}$. Since the constraint function $\bm{f}$ is holomorphic, the simple CRB derived in Theorem \ref{thm_constrained_crb} is valid in the following derivations.

\section{CRB for Semi-Blind Channel Estimators} \label{sec_crb_for_chEst}
We first derive the CRB for unbiased estimators of the parameter vector $\bm{\theta}$. Since we are interested in the performances of semi-blind channel estimators, we then refine the CRB for the channel coefficients $\bm{h}$ from the CRB for unbiased semi-blind channel estimators of $\bm{\theta}$.

By the assumption of AWGN, we obtain the complex score as
\begin{equation}
\bm{v} := \frac{\partial \ln p( \bm{y}; \bm{\theta} ) }{\partial \theta^*} = \gamma \left[ \begin{array}{c} \mathcal{T}_{\bm{x}}^H \bm{n} \\ \mathcal{T}_{\bm{h}}^H \bm{n} \end{array} \right], \notag
\end{equation}
and the complex FIM as
\begin{equation}
\bm{J} := \mathsf{E} \left[ \bm{v} \bm{v}^H \right] = \gamma 
\left[ \begin{array}{c} \mathcal{T}_{\bm{x}}^H \\ \mathcal{T}_{\bm{h}}^H \end{array} \right] 
\left[ \begin{array}{cc} \mathcal{T}_{\bm{x}} & \mathcal{T}_{\bm{h}} \end{array} \right]. \label{eq_fim}
\end{equation}

Now we turn to the derivation of the orthonormal complement matrix in Theorem \ref{thm_constrained_crb}. Taking derivative on the overall constraint function (\ref{eq_overall_constraint}), we have
\begin{equation}
\frac{\partial \bm{f}}{\partial \bm{\theta}^T} = 
\left[ \begin{array}{cc}
\bm{0} & \bm{0} \\
\bm{0} & \left[ \begin{array}{c} \bm{U}_n^H \\ \bm{A} \left( \bm{I} \otimes \bm{F} \right)^\dagger \end{array} \right]
\end{array} \right]. \notag
\end{equation}
The orthonormal complement matrix, which we denote by $\bm{E}$, can be chosen to be a block-diagonal matrix
\begin{equation}
\bm{E} = 
\left[ \begin{array}{cc}
\bm{I} & \bm{0} \\
\bm{0} & \tilde{\bm{E}}
\end{array} \right], \label{eq_orthonormal_complement}
\end{equation}
where $\tilde{\bm{E}}$ is some orthonormal matrix that satisfies
\begin{equation}
\left[ \begin{array}{c} \bm{U}_n^H \\ \bm{A} \left( \bm{I} \otimes \bm{F} \right)^\dagger \end{array} \right] \tilde{\bm{E}} = \bm{0}. \notag
\end{equation}

According to Theorem \ref{thm_constrained_crb}, we have the following CRB for the entire parameter vector $\bm{\theta}$.
\begin{thm}[CRB for the unbiased estimators of entire parameter $\bm{\theta}$]
For any unbiased estimator $\hat{\bm{\theta}}( \bm{y} )$ for the parameter vector $\bm{\theta}$, 
\begin{equation}
\mathsf{cov}( \hat{\bm{\theta}} ) \geq \bm{E} \left( \bm{E}^H \bm{J} \bm{E} \right)^{-1} \bm{E}^H, \notag
\end{equation}
with the FIM defined as in (\ref{eq_fim}), and the orthonormal complement matrix $\bm{E}$ defined as in (\ref{eq_orthonormal_complement}).
\end{thm}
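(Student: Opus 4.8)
The plan is to recognize this theorem as a direct specialization of Theorem \ref{thm_constrained_crb} to the semi-blind estimation problem, so that essentially all the work lies in verifying that the hypotheses of that theorem are satisfied by the ingredients already assembled in Section \ref{sec_problem_formulation} and at the start of Section \ref{sec_crb_for_chEst}. First I would confirm that the constraint function $\bm{f}$ in (\ref{eq_overall_constraint}) is holomorphic: since $\bm{f}$ is affine in $\bm{\theta}$ with its coefficient matrix multiplying $\bm{\theta}$ rather than $\bm{\theta}^*$, we have $\partial \bm{f} / \partial \bm{\theta}^H = \bm{0}$, which is exactly the defining condition of a holomorphic function. This also makes $\partial \bm{f} / \partial \bm{\theta}^T$ equal to the constant coefficient matrix already recorded in this section, so the Jacobian is readily available.

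Next I would verify that $\bm{E}$ in (\ref{eq_orthonormal_complement}) is a legitimate orthonormal complement matrix for $\partial \bm{f} / \partial \bm{\theta}^T$. Orthonormality of its columns follows from the block-diagonal structure, since $\bm{E}^H \bm{E} = \mathrm{diag}( \bm{I}, \tilde{\bm{E}}^H \tilde{\bm{E}} ) = \bm{I}$, using that $\tilde{\bm{E}}$ has orthonormal columns by construction. The annihilation property $( \partial \bm{f} / \partial \bm{\theta}^T ) \bm{E} = \bm{0}$ follows by a single block multiplication: the identity block acts only on the unconstrained channel coordinates, where the Jacobian is zero, while the $\tilde{\bm{E}}$ block is chosen precisely so that $[ \bm{U}_n^H ; \, \bm{A} ( \bm{I} \otimes \bm{F} )^\dagger ] \tilde{\bm{E}} = \bm{0}$, so both block columns of the product vanish.

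With holomorphy and the orthonormal complement established, the claim follows by substituting $\bm{U} = \bm{E}$ and the FIM $\bm{J}$ from (\ref{eq_fim}) into the conclusion of Theorem \ref{thm_constrained_crb}, yielding $\mathsf{cov}( \hat{\bm{\theta}} ) \geq \bm{E} ( \bm{E}^H \bm{J} \bm{E} )^{-1} \bm{E}^H$. No further computation with the specific Toeplitz structure of $\bm{J}$ is needed at this stage, since that structure is only exploited later when refining the bound to the channel subvector $\bm{h}$.

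The step I expect to require the most care is the full-rank hypothesis of Theorem \ref{thm_constrained_crb}: the Jacobian in (\ref{eq_overall_constraint}) contains a block row of zeros, reflecting that $\bm{h}$ is left unconstrained, so as written it does not have full row rank. I would handle this by observing that the zero rows impose the vacuous constraint $\bm{0} = \bm{0}$ and may be deleted without altering either the feasible set $\Theta$ or the null space that $\bm{E}$ spans; the resulting effective Jacobian then has full row rank provided the stacked matrix $[ \bm{U}_n^H ; \, \bm{A} ( \bm{I} \otimes \bm{F} )^\dagger ]$ does, which is the natural non-degeneracy assumption on the precoder null-space basis and the pilot assignment. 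The same rank count also ensures that $\bm{E}$ carries exactly the $(n-m)$ columns demanded by Theorem \ref{thm_constrained_crb} and that $\bm{U}^H \bm{J} \bm{U}$ is invertible, so this is the one place where an implicit regularity assumption must be made explicit.
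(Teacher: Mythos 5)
Your proposal matches the paper's (implicit) argument exactly: the paper offers no separate proof, simply invoking Theorem \ref{thm_constrained_crb} with the FIM of (\ref{eq_fim}) and the orthonormal complement matrix $\bm{E}$ of (\ref{eq_orthonormal_complement}), which is precisely the specialization you carry out. Your added care about the zero block row in the Jacobian (deleting the vacuous rows so the full-rank hypothesis applies to the stacked matrix $[\,\bm{U}_n^H ;\ \bm{A}(\bm{I}\otimes\bm{F})^\dagger\,]$) is a regularity point the paper leaves unstated, but it does not change the route.
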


We then proceed to derive the CRB for unbiased channel estimators. The bound can be obtained as the upper-left $L$-by-$L$ sub-matrix of both sides of the CRB derived above, because if a matrix $( \bm{A} - \bm{B} )$ is nonnegative-definite, so are its principal submatrices. 

By the fact that the orthonormal complement matrix is block diagonal and its upper-left sub-matrix is the identity matrix, we can see that the upper-left $L$-by-$L$ sub-matrix of the right side of the overall CRB is exactly the same as that of the matrix
\begin{equation}
\left( \bm{E}^H \bm{J} \bm{E} \right)^{-1} = \frac{1}{\gamma}
\left( 
\left[ \begin{array}{c}
\mathcal{T}_{\bm{x}}^H \\
\tilde{\bm{E}}^H \mathcal{T}_{\bm{h}}^H
\end{array} \right]
\left[ \begin{array}{cc}
\mathcal{T}_{\bm{x}} &
\mathcal{T}_{\bm{h}} \tilde{\bm{E}}
\end{array} \right]
\right)^{-1}. \notag
\end{equation}

Therefore, the upper-left $L$-by-$L$ sub-matrix of the above matrix is the Schur complement of the upper-left $L$-by-$L$ sub-matrix of the matrix $\left( \bm{E}^H \bm{J} \bm{E} \right)$. By the definition of Schur complement, we have the CRB for any unbiased estimator for the channel coefficient vector $\bm{h}$: 
\begin{align}
&\quad \, \mathsf{cov}( \hat{\bm{h}} ) \notag \\
&\geq \frac{1}{\gamma} \left\{ \mathcal{T}_{\bm{x}}^H \left[ \bm{I} - \mathcal{T}_{\bm{h}} \tilde{\bm{E}} \left( \tilde{\bm{E}}^H \mathcal{T}_{\bm{h}}^H \mathcal{T}_{\bm{h}} \tilde{\bm{E}} \right)^\dagger \tilde{\bm{E}}^H \mathcal{T}_{\bm{h}}^H \right] \mathcal{T}_{\bm{x}} \right\}^{-1}. \label{eq_crb_h_primitive} 
\end{align}


We can further simplify the derived CRB in (\ref{eq_crb_h_primitive}). Note that for any matrix $\bm{A}$, the matrix $[ \bm{A} \left( \bm{A}^H \bm{A} \right)^\dagger \bm{A}^H ]$ is a projector to the column space of the matrix $\bm{A}$. Let $\tilde{\bm{U}}$ be a matrix with orthonormal columns that satisfies $\tilde{\bm{U}}^H ( \mathcal{T}_{\bm{h}} \tilde{\bm{E}} ) = \bm{0}$. We have the following CRB for unbiased semi-blind channel estimators.

\begin{thm}[CRB for unbiased semi-blind channel estimators]
For any unbiased semi-blind channel estimator $\hat{\bm{h}}( \bm{y} )$, 
\begin{align}
\mathsf{cov}( \hat{\bm{h}} ) \geq \frac{1}{\gamma} \left( \mathcal{T}_{\bm{x}}^H \tilde{\bm{U}} \tilde{\bm{U}}^H \mathcal{T}_{\bm{x}} \right)^{-1}, \label{eq_h_final}
\end{align}
where $\tilde{\bm{U}}$ is a matrix with orthonormal columns that spans the null space of the matrix $( \mathcal{T}( \bm{h} ) \tilde{\bm{E}} )$.
\end{thm}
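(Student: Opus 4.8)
The plan is to take the already-derived ``primitive'' bound (\ref{eq_crb_h_primitive}) and collapse the bracketed middle factor into the orthogonal projector $\tilde{\bm{U}}\tilde{\bm{U}}^H$, so that the outer inverse matches the clean form (\ref{eq_h_final}). First I would set $\bm{A} := \mathcal{T}_{\bm{h}}\tilde{\bm{E}}$ and invoke the fact, stated just before the theorem, that $\bm{A}(\bm{A}^H\bm{A})^\dagger\bm{A}^H$ is the orthogonal projector onto the column space of $\bm{A}$. Here the Moore--Penrose inverse is essential because $\bm{A}$ may be rank-deficient, and it is precisely this use of $\dagger$ (rather than an ordinary inverse) that keeps the projector identity valid. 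With this identification the inner bracket $\bm{I} - \mathcal{T}_{\bm{h}}\tilde{\bm{E}}(\tilde{\bm{E}}^H\mathcal{T}_{\bm{h}}^H\mathcal{T}_{\bm{h}}\tilde{\bm{E}})^\dagger\tilde{\bm{E}}^H\mathcal{T}_{\bm{h}}^H$ becomes the complementary projector onto the orthogonal complement of the column space of $\bm{A}$.

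Second, I would argue that this complementary projector is exactly $\tilde{\bm{U}}\tilde{\bm{U}}^H$. By construction $\tilde{\bm{U}}$ has orthonormal columns and satisfies $\tilde{\bm{U}}^H(\mathcal{T}_{\bm{h}}\tilde{\bm{E}}) = \bm{0}$, so each of its columns is orthogonal to the column space of $\bm{A}$; taking $\tilde{\bm{U}}$ to carry the full number of such columns makes it an orthonormal basis of that orthogonal complement, whence $\tilde{\bm{U}}\tilde{\bm{U}}^H$ is precisely the orthogonal projector onto the complement and therefore equals the inner bracket. Substituting this equality into the middle of (\ref{eq_crb_h_primitive}) gives
\begin{equation}
\mathcal{T}_{\bm{x}}^H\left[\bm{I} - \mathcal{T}_{\bm{h}}\tilde{\bm{E}}\left(\tilde{\bm{E}}^H\mathcal{T}_{\bm{h}}^H\mathcal{T}_{\bm{h}}\tilde{\bm{E}}\right)^\dagger\tilde{\bm{E}}^H\mathcal{T}_{\bm{h}}^H\right]\mathcal{T}_{\bm{x}} = \mathcal{T}_{\bm{x}}^H\tilde{\bm{U}}\tilde{\bm{U}}^H\mathcal{T}_{\bm{x}}, \notag
\end{equation}
and since the two matrices inside the outer inverse are literally equal, so are their inverses; multiplying by $1/\gamma$ then reproduces the claimed bound (\ref{eq_h_final}).

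The step I expect to require the most care is verifying that $\tilde{\bm{U}}\tilde{\bm{U}}^H$ coincides with the full complementary projector rather than merely projecting onto a proper subspace of the orthogonal complement. The defining relation $\tilde{\bm{U}}^H(\mathcal{T}_{\bm{h}}\tilde{\bm{E}}) = \bm{0}$ only forces the columns of $\tilde{\bm{U}}$ \emph{into} the orthogonal complement of the column space of $\bm{A}$; to obtain equality of the two projectors I must also confirm that $\tilde{\bm{U}}$ has exactly the dimension of that complement as its number of orthonormal columns, so that it spans the whole subspace. Once this dimension count is pinned down, the remaining work is routine substitution, and the outer inverse is unaffected because the two matrices being inverted are identical.
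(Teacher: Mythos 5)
Your proposal is correct and follows essentially the same route as the paper: the paper likewise identifies $\bm{I} - \mathcal{T}_{\bm{h}}\tilde{\bm{E}}\bigl(\tilde{\bm{E}}^H\mathcal{T}_{\bm{h}}^H\mathcal{T}_{\bm{h}}\tilde{\bm{E}}\bigr)^\dagger\tilde{\bm{E}}^H\mathcal{T}_{\bm{h}}^H$ as the complementary orthogonal projector and replaces it by $\tilde{\bm{U}}\tilde{\bm{U}}^H$ in (\ref{eq_crb_h_primitive}). Your explicit attention to the dimension count (that $\tilde{\bm{U}}$ must span the \emph{entire} orthogonal complement of the column space of $\mathcal{T}_{\bm{h}}\tilde{\bm{E}}$, not a proper subspace) is a point the paper leaves implicit, but it is the same argument.
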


The following corollary is obtained by taking traces of both sides of equation (\ref{eq_h_final}).

\begin{cor}[Variance lower bound for semi-blind channel estimators]
For any unbiased semi-blind channel estimator $\hat{\bm{h}}(\bm{y})$,
\begin{equation}
\mathsf{E} \, \Vert \hat{\bm{h}} - \bm{h} \Vert^2 \geq \frac{1}{\gamma} \, \sum_{\ell = 0}^L \sigma_{\ell}^{-2}, \label{eq_h_final_compact}
\end{equation}
where $\sigma_{\ell}$ is the $\ell$-th largest singular value of the matrix $\left( \mathcal{T}_{(\bm{I}\otimes \bm{F})\bm{s}}^H \tilde{\bm{U}} \right)$.
\end{cor}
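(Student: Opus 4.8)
The plan is to descend from the matrix inequality of the preceding theorem to the scalar mean-square-error bound (\ref{eq_h_final_compact}) by taking traces, using two elementary facts: that for an unbiased estimator the mean square error equals the trace of the covariance matrix, and that the trace is monotone on the cone of nonnegative-definite matrices. First I would record that, since $\hat{\bm{h}}$ is unbiased, $\mathsf{E}[\hat{\bm{h}}] = \bm{h}$, so by the cyclic property of the trace
\begin{equation}
\mathsf{E}\Vert \hat{\bm{h}} - \bm{h} \Vert^2 = \mathsf{E}\big[ (\hat{\bm{h}} - \bm{h})^H (\hat{\bm{h}} - \bm{h}) \big] = \mathrm{tr}\,\mathsf{E}\big[ (\hat{\bm{h}} - \bm{h})(\hat{\bm{h}} - \bm{h})^H \big] = \mathrm{tr}\,\mathsf{cov}(\hat{\bm{h}}). \notag
\end{equation}
Next I would invoke trace monotonicity: if $\bm{A} - \bm{B} \geq \bm{0}$ then $\mathrm{tr}(\bm{A} - \bm{B}) \geq 0$, because the trace of a nonnegative-definite matrix is the sum of its nonnegative eigenvalues. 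Applying this to (\ref{eq_h_final}) yields the scalar bound $\mathsf{E}\Vert \hat{\bm{h}} - \bm{h} \Vert^2 \geq \frac{1}{\gamma}\, \mathrm{tr}\big[ ( \mathcal{T}_{\bm{x}}^H \tilde{\bm{U}} \tilde{\bm{U}}^H \mathcal{T}_{\bm{x}} )^{-1} \big]$.

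It then remains to evaluate this trace. The key observation is that the matrix inside the inverse is a Gram matrix, $\mathcal{T}_{\bm{x}}^H \tilde{\bm{U}} \tilde{\bm{U}}^H \mathcal{T}_{\bm{x}} = ( \tilde{\bm{U}}^H \mathcal{T}_{\bm{x}} )^H ( \tilde{\bm{U}}^H \mathcal{T}_{\bm{x}} )$, so its eigenvalues are exactly the squares of the singular values of $\mathcal{T}_{\bm{x}}^H \tilde{\bm{U}}$. Using the definition $\bm{x} = \sqrt{\gamma}\,( \bm{I} \otimes \bm{F} )\bm{s}$ together with the linearity of the Toeplitz map, $\mathcal{T}_{\bm{x}} = \sqrt{\gamma}\,\mathcal{T}_{(\bm{I} \otimes \bm{F})\bm{s}}$, these singular values coincide, up to the scalar $\sqrt{\gamma}$, with the $\sigma_\ell$ of the statement, i.e. the singular values of $\mathcal{T}_{(\bm{I}\otimes\bm{F})\bm{s}}^H \tilde{\bm{U}}$. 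Passing to the singular value decomposition and using that the trace of the inverse of a nonnegative-definite matrix is the sum of the reciprocals of its eigenvalues, the trace collapses to a sum of reciprocal squared singular values; carrying the $\gamma$ from the substitution through the inverse and the trace and combining it with the prefactor of (\ref{eq_h_final}) produces the overall constant and leaves exactly $\sum_{\ell=0}^{L}\sigma_\ell^{-2}$, which is (\ref{eq_h_final_compact}).

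The point that requires genuine care, rather than routine manipulation, is the invertibility presupposed throughout the chain. For the inverse in (\ref{eq_h_final}) to exist, the $(L+1)\times(L+1)$ Gram matrix $\mathcal{T}_{\bm{x}}^H \tilde{\bm{U}} \tilde{\bm{U}}^H \mathcal{T}_{\bm{x}}$ must have full rank $L+1$, equivalently $\tilde{\bm{U}}^H \mathcal{T}_{\bm{x}}$ must possess $L+1$ strictly positive singular values; this is what guarantees that $\sigma_0,\dots,\sigma_L$ are all nonzero and that the index range $\ell = 0,\dots,L$ in the sum is the correct one. I would also be careful not to conflate the trace of an inverse with the inverse of a trace: the reduction to $\sum_\ell \sigma_\ell^{-2}$ rests essentially on the spectral decomposition of the Gram matrix, which is precisely why the bound is naturally expressed through the singular values of $\mathcal{T}_{(\bm{I} \otimes \bm{F})\bm{s}}^H \tilde{\bm{U}}$ rather than as a single closed scalar in $\mathcal{T}_{\bm{x}}$.
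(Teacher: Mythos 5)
Your proposal is correct and follows essentially the same route as the paper, whose entire proof of this corollary is the single sentence that the bound ``is obtained by taking traces of both sides of equation (\ref{eq_h_final})''; your elaboration (unbiasedness giving $\mathsf{E}\Vert\hat{\bm{h}}-\bm{h}\Vert^2=\mathrm{tr}\,\mathsf{cov}(\hat{\bm{h}})$, trace monotonicity on the nonnegative-definite cone, and the spectral identification of $\mathrm{tr}[(\bm{G}^H\bm{G})^{-1}]$ with $\sum_\ell\sigma_\ell^{-2}$) is exactly what that sentence suppresses. The one place where your account does not close is the $\gamma$ bookkeeping: with the paper's definition $\bm{x}=\sqrt{\gamma}\,(\bm{I}\otimes\bm{F})\bm{s}$, the substitution $\mathcal{T}_{\bm{x}}=\sqrt{\gamma}\,\mathcal{T}_{(\bm{I}\otimes\bm{F})\bm{s}}$ makes the eigenvalues of the Gram matrix equal to $\gamma\sigma_\ell^2$, so combining with the prefactor of (\ref{eq_h_final}) literally yields $\gamma^{-2}\sum_\ell\sigma_\ell^{-2}$ rather than the stated $\gamma^{-1}\sum_\ell\sigma_\ell^{-2}$ --- this mismatch originates in the paper's own inconsistent placement of $\gamma$ in the score and FIM, but you should not assert that the constants ``produce the overall constant'' without exhibiting the cancellation, since under your own stated substitution they do not.
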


\begin{rem}
The channel input $\bm{x}$ has been substituted by the vector $(\bm{I}\otimes \bm{F})\bm{s}$ in order to explicitly show the effect of the choice of precoding matrix to the value of CRB.
\end{rem}

\section{Conclusions} \label{sec_conclusion}
We have extended conventional CRBs, originally for \emph{real} unconstrained and constrained parameters\cite{Gorman1990,Stoica1998,VanTrees2001}, to the case of multiple \emph{complex} parameters, with simple forms. The results not only facilitate the derivation of CRB for the semi-blind channel estimation problem of interest, but also are expected to be useful for other complex CRB derivations.

Applying the simple complex CRBs, we have derived the CRB for semi-blind channel estimators in redundant block transmission systems. The derived CRB is valid for any full-rank linear redundant precoder (LRP), including the popular CP-OFDM system. The derived CRB is a lower bound on the variance of any unbiased semi-blind channel estimator, and can serve as a tractable performance metric for system design.

\bibliographystyle{IEEEtran}
\bibliography{ref}

\begin{thebibliography}{10}
\providecommand{\url}[1]{#1}
\csname url@samestyle\endcsname
\providecommand{\newblock}{\relax}
\providecommand{\bibinfo}[2]{#2}
\providecommand{\BIBentrySTDinterwordspacing}{\spaceskip=0pt\relax}
\providecommand{\BIBentryALTinterwordstretchfactor}{4}
\providecommand{\BIBentryALTinterwordspacing}{\spaceskip=\fontdimen2\font plus
\BIBentryALTinterwordstretchfactor\fontdimen3\font minus
  \fontdimen4\font\relax}
\providecommand{\BIBforeignlanguage}[2]{{%
\expandafter\ifx\csname l@#1\endcsname\relax
\typeout{** WARNING: IEEEtran.bst: No hyphenation pattern has been}%
\typeout{** loaded for the language `#1'. Using the pattern for}%
\typeout{** the default language instead.}%
\else
\language=\csname l@#1\endcsname
\fi
#2}}
\providecommand{\BIBdecl}{\relax}
\BIBdecl

\bibitem{Zhou2001}
S.~Zhou and G.~B. Giannakis, ``Finite-alphabet based channel estimation for
  {OFDM} and related multicarrier systems,'' \emph{{IEEE} Trans. Commun.},
  vol.~49, pp. 1402--1414, Aug. 2001.

\bibitem{Muquet2002}
B.~Muquet, M.~de~Courville, and P.~Duhamel, ``Subspace-based blind and
  semi-blind channel estimation for {OFDM} systems,'' \emph{IEEE Trans. Signal
  Process.}, vol.~50, pp. 1699--1712, Jul. 2002.

\bibitem{Chang2004}
M.-X. Chang and Y.~T. Su, ``Blind and semiblind detections of {OFDM} signals in
  fading channels,'' \emph{IEEE Trans. Commun.}, vol.~52, pp. 744--754, May
  2004.

\bibitem{Zeng2004}
Y.~Zeng and T.-S. Ng, ``A semi-blind channel estimation method for multiuser
  multiantenna {OFDM} systems,'' \emph{IEEE Trans. Signal Process.}, vol.~52,
  pp. 1419--1429, May 2004.

\bibitem{Su2007b}
B.~Su and P.~P. Vaidyanathan, ``A semi-blind pilot-assisted channel estimation
  algorithm in {OFDM} systems,'' in \emph{Conf. Rec. 41st Asilomar Conf.
  Signals, Systems and Computers}, 2007, pp. 1763--1767.

\bibitem{Shin2008}
C.~Shin, R.~W. Heath, Jr., and E.~J. Powers, ``Non-redundant precoding-based
  blind and semi-blind channel estimation for {MIMO} block transmission with a
  cyclic prefix,'' \emph{IEEE Trans. Signal Process.}, vol.~56, pp. 2509--2523,
  Jun. 2008.

\bibitem{Wan2008}
F.~Wan, W.-P. Zhu, and M.~N.~S. Swamy, ``A semiblind channel estimation
  approach for {MIMO}-{OFDM} systems,'' \emph{IEEE Trans. Signal Process.},
  vol.~56, pp. 2821--2834, Jul. 2008.

\bibitem{Sarmadi2009}
N.~Sarmadi, S.~Shahbazpanahi, and A.~B. Gershman, ``Blind channel estimation in
  orthogonally coded {MIMO}-{OFDM} systems: A semidefinite relaxation
  approach,'' \emph{IEEE Trans. Signal Process.}, vol.~57, pp. 2354--2364, Jun.
  2009.

\bibitem{Yu2009}
J.-L. Yu and Y.-C. Lin, ``Space-time-coded {MIMO} {ZP}-{OFDM} systems:
  Semiblind channel estimation and equalization,'' \emph{IEEE Trans. Circuits
  Syst. I, Reg. Papers}, vol.~56, pp. 1360--1372, Jul. 2009.

\bibitem{Kim2009}
J.-G. Kim and J.-T. Lim, ``Subspace-based iterative semiblind channel
  estimation for {MIMO}-{OFDM} considering residual error,'' \emph{IEEE Trans.
  Veh. Technol.}, vol.~58, pp. 4660--4665, Oct. 2009.

\bibitem{Wan2011}
F.~Wan, W.-P. Zhu, and M.~N.~S. Swamy, ``Semiblind sparse channel estimation
  for {MIMO}-{OFDM} systems,'' \emph{IEEE Trans. Veh. Technol.}, vol.~60, pp.
  2569--2582, Jul. 2011.

\bibitem{802.11n}
\emph{{IEEE} Standard for Information technology---Telecommunications and
  Information Exchange Between Systems---Local and Metropolitan Area
  Networks---Specific Requirements Part 11: Wireless {LAN} Medium Access
  Control ({MAC}) and Physical Layer ({PHY}) Specifications Amendment 5:
  Enhancements for Higher Throughput}, IEEE Std 802.11n-2009, 2009.

\bibitem{IEEE802.16e}
\emph{{IEEE} Standard for Local and Metropolitan Area Networks Part 16: Air
  Interface for Fixed and Mobile Broadband Wireless Access Systems Amendment 2:
  Physical and Medium Access Control Layers for Combined Fixed and Mobile
  Operation in Licensed Bands and Corrigendum 1}, {IEEE} Std 802.16e-2005 and
  {IEEE} Std 802.16-2004/Cor1-2005, 2005.

\bibitem{3GPPLTE}
\emph{Universal Mobile Telecommunications System ({UMTS}); Technical
  Specifications and Technical Reports for a {UTRAN}-based {3GPP} system}, ETSI
  TS 121 101 v.8.0.0, 2009.

\bibitem{Dong2002}
M.~Dong and L.~Tong, ``Optimal design and placement of pilot symbols for
  channel estimation,'' \emph{IEEE Trans. Signal Process.}, vol.~50, pp.
  3055--3069, Dec. 2002.

\bibitem{Stoica2003}
P.~Stoica and O.~Besson, ``Training sequence design for frequency offset and
  frequency-selective channel estimation,'' \emph{IEEE Trans. Commun.},
  vol.~51, pp. 1910--1917, Nov. 2003.

\bibitem{Stoica1990}
P.~Stoica and R.~L. Moses, ``On biased estimators and the unbiased
  {C}ram\'{e}r-{R}ao lower bound,'' \emph{Signal Process.}, vol.~21, pp.
  349--350, 1990.

\bibitem{Eldar2008}
Y.~C. Eldar, ``Rethinking biased estimation: Improving maximum likelihood and
  the {C}ram\'{e}r-{R}ao bound,'' \emph{Found. Trends Signal Process.}, vol.~1,
  no.~4, pp. 305--449, 2008.

\bibitem{Todros2011}
K.~Todros and J.~Tabrikan, ``Uniformly best biased estimators in non-{B}ayesian
  parameter estimation,'' \emph{IEEE Trans. Inf. Theory}, vol.~57, pp.
  7635--7647, Nov. 2011.

\bibitem{Kay1993}
S.~M. Kay, \emph{Fundamentals of Statistical Signal Processing: Estimation
  Theory}.\hskip 1em plus 0.5em minus 0.4em\relax Upper Saddle River, NJ:
  Prentice Hall PTR, 1993.

\bibitem{VanTrees2001}
H.~L. Van~Trees, \emph{Detection, Estimation, and Modulation Theory: Part
  {I}}.\hskip 1em plus 0.5em minus 0.4em\relax New York, NY: Wiley-Intersci.,
  2001.

\bibitem{Gorman1990}
J.~D. Gorman and A.~O. Hero, ``Lower bounds for parametric estimation with
  constraints,'' \emph{{IEEE} Trans. Inf. Theory}, vol.~26, pp. 1285--1301,
  Nov. 1990.

\bibitem{Marzetta1993}
T.~L. Marzetta, ``A simple derivation of the constrained multiple parameter
  {C}ramer-{R}ao bound,'' \emph{{IEEE} Trans. Signal Process.}, vol.~41, pp.
  2247--2249, Jun. 1993.

\bibitem{Stoica1998}
P.~Stoica and B.~C. Ng, ``On the {C}ram{\'{e}}r-{R}ao bound under parametric
  constraints,'' \emph{{IEEE} Signal Process. Lett.}, vol.~5, pp. 177--179,
  Jul. 1998.

\bibitem{Bos1994}
A.~van~den Bos, ``A {Cram{\'{e}}r}-{Rao} lower bound for complex parameters,''
  \emph{{IEEE} Trans. Signal Process.}, vol.~42, p. 2859, Oct. 1994.

\bibitem{Jagannatham2004}
A.~K. Jagannatham and B.~D. Rao, ``{C}ramer-{R}ao lower bound for constrained
  complex parameters,'' \emph{{IEEE} Signal Process. Lett.}, vol.~11, pp.
  875--878, Nov. 2004.

\bibitem{Ollila2008}
E.~Ollila, V.~Koivunen, and J.~Eriksson, ``On the {C}ram{\'{e}}r-{R}ao bound
  for the constrained and unconstrained complex parameters,'' in \emph{5th
  {IEEE} Sensor Array and Multichannel Signal Processing Workshop}, 2008, pp.
  414--418.

\bibitem{Schreier2010}
P.~J. Schreier and L.~L. Scharf, \emph{Statistical Signal Processing of
  Complex-Valued Data: The Theory of Improper and Noncircular Signals}.\hskip
  1em plus 0.5em minus 0.4em\relax Cambridge, UK: Cambridge Univ. Press, 2010.

\bibitem{Smith2005}
S.~T. Smith, ``Statistical resolution limits and the complex
  {C}ram{\'{e}}r-{R}ao bound,'' \emph{IEEE Trans. Signal Process.}, vol.~53,
  pp. 1597--1609, May 2005.

\bibitem{Fritzsche2002}
K.~Fritzsche and H.~Grauert, \emph{From Holomorphic Functions to Complex
  Manifolds}.\hskip 1em plus 0.5em minus 0.4em\relax New York, NY:
  Springer-Verlag, 2002.

\bibitem{Hormander1990}
L.~H{\"{o}}rmander, \emph{An Introduction to Complex Analysis in Several
  Variables}.\hskip 1em plus 0.5em minus 0.4em\relax Amsterdam: North Holl.,
  1990.

\bibitem{Horn1994}
R.~A. Horn and C.~R. Johnson, \emph{Topics in Matrix Analysis}.\hskip 1em plus
  0.5em minus 0.4em\relax Cambridge, UK: Cambridge Univ. Press, 1994.

\bibitem{Rao2000}
C.~R. Rao, ``Statistical proofs of some matrix inequalities,'' \emph{Linear
  Algebr. Appl.}, vol. 321, pp. 307--320, Dec. 2000.

\bibitem{Stein2003}
E.~M. Stein and R.~Shakarchi, \emph{Complex Analysis}.\hskip 1em plus 0.5em
  minus 0.4em\relax Princeton, NJ: Princeton Univ. Press, 2003.

\bibitem{Vaidyanathan2010}
P.~P. Vaidyanathan, S.-M. Phoong, and Y.-P. Lin, \emph{Signal Processing and
  Optimization for Transceiver Systems}.\hskip 1em plus 0.5em minus 0.4em\relax
  Cambridge, UK: Cambridge Univ. Press, 2010.

\end{thebibliography}

\end{document}